%
%
%

\documentclass[graybox]{svmult}


\usepackage{type1cm}        
%
\usepackage{makeidx}         
\usepackage{graphicx}        
\usepackage{multicol}        
\usepackage[bottom]{footmisc}

\usepackage{newtxtext}       %
\usepackage{newtxmath}       
\usepackage{hyperref}
\usepackage{fixme}

\newcommand{\meaningof}[1]{\llbracket #1 \rrbracket}

{

\makeindex             


\begin{document}

\title*{Analysis and Design of Uncertain Cyber-Physical Systems}
\author{Alessandro Pinto
}
\institute{Alessandro Pinto \at Work performed while at Raytheon Technologies Research Center, 2855 Telegraph Avenue, Berkeley, CA (current contact: apinto@jpl.nasa.gov; web: \url{http://www.alessandropinto.net} )
}
%
%
\maketitle

\abstract*{Current design practices for Cyber-Physical Systems (CPS) leverage several methods to deal with uncertainty in the requirements, environment, and implementation platform, such as physical and functional redundancy. CPS have evolved in the past decades towards a higher-level of autonomy, and a more decentralized and connected implementation. The evolution towards more autonomous systems has changed the computation and communication workloads, demanding particular care in the early design phases to avoid exceeding typical size, power, and cost constraints. Moreover, the use of approximate models, the complexity of the state estimation and controlled problems, and imperfect  communications, suggest that epistemic uncertainty will play a major role in these systems. 
After presenting the evolution of CPS over the past two decades, we review the main sources of uncertainty in classical CPS with emphasis on the implementation platform such as failures, timing, and implementation bugs. We present several classical methods to deal with uncertainty, and explain why these methods, while still applicable to autonomous systems, are not sufficient. Finally, we present a compositional framework that focuses on requirements, and that supports reasoning about aleatoric and epistemic uncertainty.}

\begin{description}
    \item[CPS]{Cyber-Physical System}    
    \item[AI]{Artificial Intelligence}
    \item[MQTT]{Message Queuing Telemetry Transport}
    \item[UAV]{Unmanned Aerial Vehicle}
    \item[MTTF]{Mean Time To Failure}  
    \item[FMEA]{Failure Mode and Effect Analysis}
    \item[FTA]{Fault Tree Analysis}
    \item[WCET]{Worst Case Execution Time}
    \item[EDF]{Earliest Deadline First}  
    \item[RTOS]{Real Time Operating System}
    \item[TTP]{Time Triggered Protocol}
    \item[HOL]{Higher Order Logic}
    \item[JADC2]{Joint All-Domain Command and Control}   
    \item[LiDAR]{Light Detection and Ranging}     
    \item[GNSS]{ Global Navigation Satellite System} 
    \item[NASA]{National Aeronautics and Space Administration}  
\end{description}

Several sources of uncertainty have to be taken into account in the analysis and design of CPS. The set of parameters used in the model of the physical plant may be uncertain due, for example, to manufacturing processes that are precise up to some bounded tolerance. Physical quantities are sensed by electronic components that add noise to the sensed signals. Abstraction of the physical world, which is often necessary to limit the complexity of the models used in analysis and at run-time in decision-making, introduces inaccuracies. The cyber side of a CPS, which includes both hardware and software components, exposes several types of uncertainty such as failures, latency, and implementation errors.

Design processes and tools allow engineers to minimize the impact of these types of uncertainty, and to deliver systems which can be operated with an acceptable level of risk. Real-time operating systems can provide hard guarantees on the execution of control functions, and several methods can be used to compute bounds on the worst-case execution time of software. Similarly, time-triggered or priorities-based communication protocols can provide hard bounds on the delay experienced by messages exchanged over a network. Failures can be handled through redundant architectures that are typically used in safety critical systems, while rigorous software and hardware development processes can reduce the chances of implementation errors. Protection from cyber-attacks can be achieved through physical isolation, careful control of the supply-chain, and other strategies such as authentication and encryption at the boundary of the system. Furthermore, in some cases, cyber-physical systems can be modelled with enough fidelity to enable the effective use of formal analysis and synthesis which eliminates human errors. In general, however, delivering reliable systems in the face of uncertainty comes at a high price, which demands for a careful evaluation of risks.  

In the past several years, cyber-physical systems have evolved, primarily due to pervasive connectivity, miniaturization, cost-effectiveness of hardware, and advances in the area of Artificial Intelligence (AI). The complexity of the cyber side of the CPS is constantly increasing as more functions are moved to software. Correspondingly, the hardware platforms on which this software runs have become more complex. Another evolution has occurred in the area of communications at all levels of the protocol stack allowing disparate computing platforms to be interconnected. Connectivity is pervasive today leveraging technologies such as WiFi, 4G and 5G, and higher-level protocols for the internet of things such as the Message Queuing Telemetry Transport (MQTT). The ability to connect sensing and actuation nodes to the internet enables deployment over a cloud computing infrastructure where data can be aggregated, and computationally expensive algorithms can be executed with virtually unlimited resources. This trend has driven the development of new applications that control societal-scale systems such as the electric grid, transportation, and logistics. Coupling between the physical world and the cyber world can also occur in unconventional ways such as through social media platforms that shape consumer preferences and even opinions of people, driving their physical acts. 

These new class of applications features an environment which is much more complex to model than traditional physical systems due not only to their scale, but also to new sources and types of uncertainty. Consider, for example, the case of echo chambers which seems to result from algorithms trying to learn user interests and preferences. These types of effects are not easily predictable because of high uncertainty in the environment (people in this case), which is only approximately represented by machine learning models, but that is inherently due to \emph{lack of knowledge}. New models and analysis methods are therefore needed to capture different types of uncertainties, and to analyze these new classes of systems. 

In this chapter, we start by discussing how cyber-physical systems have evolved from simple controllers to networks of highly autonomous systems. In Section \ref{sec:uncertainty}, we describe some of the most common sources of uncertainty induced by the platform supporting the software of a CPS, and in Section \ref{sec:dealing-with-uncertainty} we review the engineering methods used for analysis and implementation, aiming at reducing the risks due to uncertainty. These two sections show that reducing or eliminating uncertainty increases the complexity of a design and, ultimately, its cost. 
In Section \ref{sec:autonomy-new-uncertainty} we discuss the need for more autonomous systems, the new challenges that they bring in terms of analysis and implementation, and the need for new methods and tools that support compositional reasoning, as well as the modeling of aleatoric and (equally importantly) epistemic uncertainty. Finally, in Section \ref{sec:two-key-elements-of-new-modeling} we present a design methodology and a modeling paradigm that addresses these needs.

\section{The Evolution of Cyber-Physical Systems}
\label{sec:evolution-cps}
In the last decade, we have witnessed an evolution of the architecture and application domains of CPS. Systems that control physical processes are functionally decomposed into a plant and a controller. The plant comprises a set of physical variables that evolve over time according to certain dynamics. The controller observes some of these variables through sensors, tracks their evolution over time, and computes actions that are translated into physical effects by actuators. All these elements have evolved in complexity, from control systems to the most advanced form of autonomous systems such as self-driving vehicles and teams of Unmanned Aerial Vehicles (UAVs).

\emph{In traditional control systems}, the implementation platform is an embedded system with analog and digital Input/Output (I/O), a processing unit, local storage, and communication interfaces. The I/O are connected to sensors and actuators, and the estimation and control laws are implemented as a software function that runs periodically according to the time-scale of the physical process to be controlled. In these systems, the model of the plant is typically known. The controller is designed using traditional methods such as Lyapunov-based tools.

Control systems can be connected over a network to form a \emph{networked control system}. In the simplest deployment, sensor readings are sent to a controller in the form of packets over a network, and the computed control outputs are sent back to actuators also in the form of packets over a network. However, the control architecture could become more complex. For example, many applications such as cars, aircraft, or buildings,  span large physical plants featuring several interacting sub-systems. In these cases, control systems are typically federated: many local sensing and actuation points are connected to a controller, which in turn may be connected to a supervisory controller. The functionality of a sensing point is typically signal conditioning, or computation of threshold crossings, while actuation points implement a local controller for a physical device that maintains a given set point. The set point is computed by the supervisory controller according to some pre-defined policy. This kind of architecture can be found in building controls for instance, where the supervisory controller decides set-points for chillers and variable air volume terminal units depending on occupancy and time of day, while local controllers track the temperature of individual zones and control the mix of cold/hot air. The communication demand over the network that links these systems is low, communications can be considered reliable, and the delivery of messages from one system to another is not subject to stringent delay or jitter requirements.

As connectivity became more pervasive, and embedded platforms less power hungry, more capable, and less expensive, local controllers (even if elementary such as switches) could be networked as well. The \emph{internet of things} is a term that was coined at the edge of the current century. Being embedded in physical objects, these new class of CPS are characterized by addressable physical things at larger scale (millions, billions or even tens of trillions of devices \cite{dahlqvist2019growing}). Connectivity gives rise to new opportunities as the massive amount of data collected by these devices can be processed to create models using new machine learning techniques for prediction, maintenance, and optimization. However, connection among nodes can no longer be assumed reliable. In these types of systems, new nodes can join the network while others can leave, and nodes need to be able to describe their capabilities. 

Moving along the functional axis, CPS have evolved to automate more complex functions that have been traditionally under the responsibility of humans. \emph{Autonomous systems} are a natural evolution of CPS, and autonomy is a feature orthogonal to the evolution trajectory described above. Autonomous systems, in fact, can be localized or distributed, and networked over a range of communication technologies. We can contrast traditional controlled systems shown in Figure \ref{fig:traditional-controlled-systems}, and autonomous systems shown in Figure \ref{fig:autonomous-multi-agent-systems}. In control systems, the environment is bounded by an envelope defined as constraints on physical variables. It is also bounded in terms of the way in which humans can interact with the system. Human inputs are encoded as a finite set of well-defined commands, while outputs are values of state variables and modes of the systems displayed to the user. The human-machine interaction, in fact, is mediated by a user interface that can be implemented by a display and a set of switches or numerical set points. All higher level commands, which could represent a large class, are elaborated by humans that plan for lower-level commands sent to the system. The controlled system senses physical variables and updates the state estimate using accurate models. A control law, also derived using models, maps the belief state into actuation that can be applied with known error bounds. Control laws operate at short time scales which makes predictions fairly accurate. The state is typically encoded as a vector of continuous state variables representing the physical variables in the environment. They are also typically associated with a known probability density function. In some cases, a supervisory control policy may change the control law in a pre-defined way depending on the command received by the user or on the state of the CPS.    

\begin{figure}
    \centering
    \includegraphics[width=0.7\textwidth]{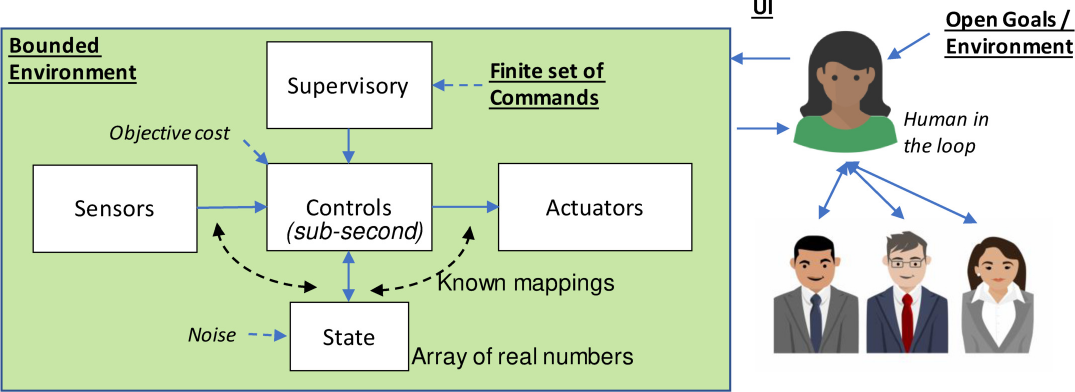}
    \caption{A traditional controlled systems}
    \label{fig:traditional-controlled-systems}
\end{figure}

\begin{figure}
    \centering
    \includegraphics[width=0.7\textwidth]{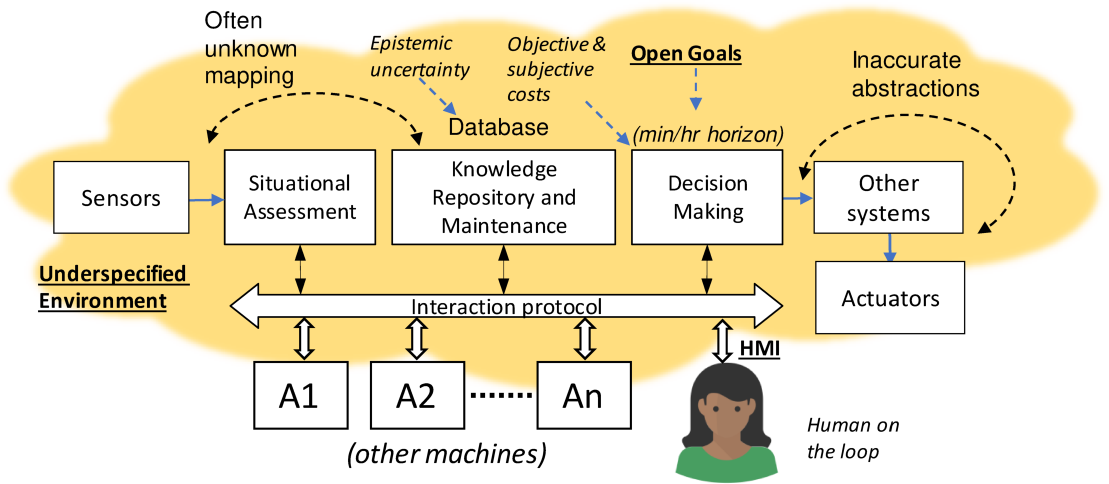}
    \caption{A multi-agent system featuring autonomous agents with local situational awareness and decision-making capabilities.}
    \label{fig:autonomous-multi-agent-systems}
\end{figure}

The shift of tasks from humans to the autonomous system exposes a set of other challenges. First, the environment and the goals that were assessed and understood by humans, have to be now understood by the system. Both the environment and goals become hard to bound at design time, resulting in an under-specified environment and an open set of goals. The autonomous system is in charge of using the sensed information to assess the situation. However, the semantic gap between the sensed variables and what they mean in terms of state of the environment is much higher than in control systems. A classic example is the identification and classification of objects from images. In many cases, models that related sensed variables to situations are not known, and are approximated by parametric or non-parametric models learned from data. Goals have to be understood and then decomposed by the system through planning. Plans span a much longer time horizon than control systems and predictions become much more uncertain. Moreover, optimality is judged using subjective measures: cost functions are defined by humans to encode preferences such as choices that are more or less ethical or safe, they are evaluated over an internal world model, and they are projected in the future with high uncertainty. Even the effect of actions is hard to model as it is the result of abstractions that are needed to reduce the complexity of the planning problem. Finally, autonomous systems are also typically networked and interact by negotiating actions and sharing information contributing to a potential explosion of uncertainty in the global state. 

\section{Sources and types of uncertainty}
\label{sec:uncertainty}

Previous work has investigated the sources and types of uncertainty to be considered in the design of CPS. 
For example, the work in \cite{hastings2004framework} provides a taxonomy of uncertainty, risks (or opportunities), mitigation strategies, and outcomes. 
Uncertainty can arise from lack of knowledge or lack of definition, and it can be either statistically characterized, a known unknown, or an unknown unknown. 
The risks associated with uncertainty can range from catastrophic to system performance degradation, they can have impact on cost and schedule, or they can present new opportunities such as emergent capabilities. 
Typical mitigation strategies include margins, redundancy, generality, and upgradability. 
The outcomes include reliability, robustness, flexibility, and evolvability. 
The work in \cite{ramirez2012taxonomy} provides a taxonomy of uncertainty introduced either at the requirements, design, or run-time stage of a system lifecycle. 
Requirements could be incomplete, ambiguous, or unsatisfiable. 
The design could be inadequate because of unexplored alternatives, or lack of traceability. 
At run-time, the system may experience failures, unpredictable environments, or inaccurate estimate of the state of the world.

For formal analysis, a mathematical framework is required for representing uncertainty. 
There are two types of uncertainty that should be considered: aleatoric and epistemic. 
Aleatoric uncertainty is inherent in the observed phenomenon and cannot be reduced. 
For example, the measurement of a physical quantity is the combination of the actual value plus an additive noise term that depends on several factors such as thermal noise. In this case, the measurement signal is represented by a stochastic process which can be characterized by its properties such as ergodicity and stationarity, and by its statistics such as mean and autocorrelation, or even by a probability density function. 
In other cases, uncertainty is related to the parameters of a system. For example, the mechanical properties of a system which directly contribute to the parameters of the model of a plant might not be exactly known due to the manufacturing process which cannot be accurately controlled. 
It is, however, far from uncommon to incur in lack of knowledge about uncertainty, and other models might be more appropriate such as the Dempster-Shafer theory of evidence \cite{dempster_upper_1967,shafer_mathematical_1976} or fuzzy sets \cite{zadeh_fuzzy_nodate}. Other methods that directly address epistemic uncertainty will be discussed later in Section \ref{sec:two-key-elements-of-new-modeling}.

While these representations are useful to develop algorithms for formal analysis of uncertain systems, their application to system design requires additional methods and tools that define relevant quantities, their associated uncertainties, and efficient modeling methodologies. In a model-driven development process, the functional view of the CPS is an architecture which includes the environment, the variables that are sensed, the block diagram of the estimation and control functions, and the actuation signals that are sent back to the environment. This model is then implemented by what we have named the \emph{platform} where sensed variables are mapped to sensors, the estimation and control functions are mapped to software modules, and the actuation signals are mapped to actuators. Links among functional blocks are implemented by communication paths that could be simply variables held in memory, or connections over a network. The semantics of the model determines the execution policy of the software that is supported by an operating system. The platform includes several sources of uncertainty that affect availability and integrity of data, correctness, and precision in actuation. We now list some sources of uncertainty introduced by a platform supporting a CPS application together with models used for analysis.

\noindent
\textbf{Failures}. All components are subject to failures. A component such as a sensor, an actuator, a processor or any other electronic or mechanical component can be characterized by the probability distribution $F(t)=P(T \leq t)$ of its time to failure $T$. This distribution defines the probability that a component fails before $t$ \cite{kapur2014reliability}. Let $f(t)$ denote the corresponding probability density function. A metric often used in reliability models is the Mean Time To Failure (MTTF)  (or Mean Time Between Failures if the component is serviceable) which is the first moment of $f(t)$. 

Components may fail in different ways, each having a different impact on system safety and performance. For example, a sensor may stop functioning, or may generate erroneous measurements. These two different failure modes have different consequences and require different identification mechanisms at run-time. The Failure Mode and Effect Analysis (FMEA) \cite{stamatis2003failure} methodology is used for this type of analysis. 

Fault Tree Analysis \cite{stamatelatos2002fault} is used to determine the probability of critical events to occur as a result of failures. In a fault tree, a top-level event of interest is decomposed recursively into intermediate events until basic events are reached that cannot be decomposed further. Starting from the failure statistics of the basic events (which must be available), quantitative analysis can then be used to compute the probability of the top-level event. 

\noindent
\textbf{Noise or imprecision}. Sensors measure physical quantities and generate an electrical signal. Such signal reflects the physical quantity up to a certain accuracy, and with added noise. Actuators are also imprecise: there is a difference between the electrical signal and the achieved physical effect. Computation may have to be implemented using finite precision hardware. Finally, the parameters of the plant model may not be precisely known, or the model used by the controller may bear errors due to abstraction.

\noindent
\textbf{Timing}. In a cyber-physical systems, the  execution of the control algorithm (that may include software, hardware and networks) must follow the same tempo of the physical environment. In other words, execution time is a fundamental contributor to the correctness specification of the control algorithm. Execution time from sensors to actuators is difficult to predict in general because it is affected by several factors including optimizations performed by compilers when generating machine code, delays induced by memory access which varies depending on the location of data and instructions (main memory, or cache), and instruction processing pipelines. Analysis tools aim at estimating the Worst Case Execution Time (WCET) of a task \cite{wilhelm_worst-case_2008}. For networked applications, the communication delay between two tasks depends on the type of protocols used by the network. Predicting communication delays is also often done using worst-case analysis as in Network Calculus \cite{le_boudec_network_2001}. Dedicated techniques can be used if the protocol is time-triggered, or priority-based.
For other protocols such as WiFi, probabilistic models based on Markov Chains are also available for analysis (see early work in \cite{bianchi_performance_2000} for example). Given the execution semantics of a set of communicating tasks, these models can be used to predict the end-to-end delay of a control function.

\noindent\textbf{Security}. For cyber-physical systems there are both cyber and physical security concerns. Example of cyber-attacks include standard attacks on communication networks. However, more complex attacks may come from the supply chain where counterfeit components containing malicious elements may end up being used in actual products. The physical side of the system can be attacked either by directly tampering with sensors, or by forcing the environment to go outside the envelope for which the system has been designed. Attacks can be characterized by different metrics including their likelihood (depending on how hard it would be for the attacker to succeed), cost of their consequences, and cost to prevent potential attacks. Attack trees \cite{mauw2005foundations}, which are similar to fault trees, can be used to analyze the cost and likelihood of an attack (which often involves multiple steps). 

\noindent\textbf{Implementation errors}. Hardware and software components may include implementation errors. We  consider the case where the requirements or specification of the hardware or software component is correct, but the implementation introduces errors. Such errors are not acceptable, and they are not statistically  characterized as in the case of failures. Rather, design processes and architectural solutions are used to minimize the risk of errors going undetected in the final implementation.

\section{Dealing with uncertainty: A short review}
\label{sec:dealing-with-uncertainty}
The sources of uncertainty coming from the platform described in the previous section impact the safety and performance of a CPS. However, engineering methods are able to deliver remarkably safe systems. For example, the probability of loss of control of a commercial aircraft is indeed $10^{-9}$ per flight hour as shown by the statistics kept updated by The Boeing Company \cite{airplanes2019statistical}. Design processes and tools have been developed to identify and remove uncertainty towards delivering high-assurance systems. An example of guidelines can be found in aviation where standards such as the ARP4661 \cite{arp47611996guidelines} and ARP4754 \cite{arp4754a2010guidelines} provide considerations on how to analyze, categorize and deal with hazards, and how to derive requirements that, when implemented, minimize the risk of catastrophic events.

Analysis and elimination of uncertainty is not achieved with the use of one system model, but rather with dedicated techniques for different aspects of the design, some of which have been referenced in the previous section. Each model focuses on the effective computation of key performance parameters, and on the qualitative or quantitative analysis of uncertainty. 

\noindent\textbf{Dealing with failures: guaranteeing availability and integrity.} Redundancy is a way to deal with failures. When the measurement of a physical quantity is essential to safe operations, two or three sensors can be used to deal with either permanent failures or the generation of wrong measurements. The outputs of these sensors are the inputs of a voting scheme that decides which measurement to trust. Redundancy is also used for computing platforms where multiple processors execute functionally equivalent software. To avoid common mode failures, processors, programming languages, and compilers may also be different, as demonstrated in the implementation of the Boeing 777 primary control system \cite{yeh2001safety}. Redundancy is a basic mechanism also used in communication systems where hosts are linked by multiple paths with independent links, and control bits are added to encoded information to detect and correct errors at the receiving end.

\noindent\textbf{Guaranteeing security.} Denying physical access to potential attackers is one possible measure that can be effective in some settings. However, as systems become more connected, and supply chains more global, the attack surface expands and other methods are needed such as the use of secure operating systems \cite{klein_sel4_2009}, and techniques to control the pedigree of hardware components \cite{rostami_primer_2014}. Physical attacks, require not only increasing the observability of the plant through redundant sensors, but also designing monitors to identify and isolate potential attacks \cite{pasqualetti_attack_2013}. 

\noindent\textbf{Guaranteeing timely execution.} The total delay from sensors to actuators must be bounded and small enough to follow the time-scale of the physical processes to be controlled. There could be uncertainty in the execution time of software and in the communication delay among hardware and software components. To remove this uncertainty, several techniques can be used both at the hardware and software level. In many cases, time determinacy is favored over performance and many useful advanced implementation methods such as compiler optimization, caching, and dynamic memory allocation are not used. These restrictions are needed to derive a good and reliable estimate of the worst-case computation time of tasks executed on a CPU. 

Once the computation time is known, schedulability analysis provides the framework for priorities to tasks.
Consider a set of $N$ tasks, each characterized by a period of execution $T_i$ (which in this simplified setting also corresponds to the task deadline) and worst case execution time $W_i$. A scheduling policy assigns priorities to these tasks so that they don't miss their deadlines. Typical policies include the rate-monotonic static priority assignment, and the earliest deadline first (EDF) dynamic priority assignment schemes. For both, there are simple schedulability tests that can assure that no task will miss its deadline. For example, when the EDF policy is used, a utilization $U=\sum_{i =1}^{N} W_i/T_i$ less than or equal to one guarantees schedulability. In practice, the total utilization is kept much lower than one (a case of margin) to deal with potential uncertainty. Tasks are then scheduled by a Real-Time Operating System (RTOS) which provides hard guarantees on typical services such as context switching (which is instead uncertain for general operating systems).

When communication among tasks occur over a network, the network delay must also be bounded and not uncertain. Communication protocols for cyber-physical systems have been developed to guarantee bounded transmission time. Some protocols allow for the definition of priorities associated with messages, and priority assignment can be used to bound the maximum delay experienced by a message. Other systems, such as the Time Triggered Protocol (TTP) guarantee time determinacy by assigning periodic time slots to the nodes in a network. 

\noindent\textbf{Being robust against noise and imprecision}. Filtering, and robust control \cite{zhou1998essentials} are common techniques to deal with bounded uncertainty about the model of the plant, and to reject several types of disturbances. 

\noindent\textbf{Developing bug free software.} Several strategies are put in place in a typical design flow to find and remove errors such as code reviews, unit testing, and integration testing (where testing is done in simulation, using Hardware-In-the-Loop, or directly in the field). For high-assurance systems, independence and redundancy are two commonly used strategies: the software development team is kept isolated from the test and evaluation team, the same function is written in different programming languages, compiled using different compilers, and executed at run-time by different processors. 

Several steps in the development process can also be automated. For example, Modified Condition / Decision Coverage \cite{hayhurst_practical_2001} is a metric that defines how well a set of tests triggers all conditions and decisions in the code. Tools for the generation of tests aiming at maximizing coverage also exist. Finally, formal method tools such as model-checking \cite{jhala_software_2009} and abstract interpretations \cite{bertrane_static_2015} can provide absolute assurance of software correctness. Interactive theorem provers, such as Isabelle/HOL\cite{Nipkow-Paulson-Wenzel:2002}, are also commonly used tool. Isabelle, for example, has been used to prove the functional correctness of the seL4 microkernel \cite{klein2009sel4}.

The effectiveness of a testing-based approach depends on the set of tests that are used to verify that a system satisfies requirements. In many cases, it is impossible or impractical to test a system for all possible inputs, which makes testing an incomplete verification method. However, testing can be done on the real system which does not require the development of additional models and increases confidence that the system will work in operation (at least for those conditions that have been used during testing). On the other hand, formal verification techniques are exhaustive and can produce a formal proof that the system satisfies its requirements. However, a formal model needs to be developed, and verification algorithms typically do not scale well with the size of the input and state encoding. 

\paragraph{Final Remarks}

Before closing this section, it is worth mentioning modeling, analysis, and synthesis tools that in principle can be used to formally verify properties of a cyber-physical system under uncertainty, or to guarantee the satisfaction of such properties by construction. 

Stochastic hybrid systems \cite{lavaei2021automated} is a modeling paradigm for systems that can be characterized by a finite set of discrete states, and where each state is associated with a noisy dynamical system (a stochastic differential equation). Probabilistic transitions among locations are associated with a guard that enables them, and a probabilistic reset map that changes the continuous state variables from the source state to a probability distribution in the target state. Typical analysis problems include reachability, safety, and stability, while design problem include optimal control. In principle, this model could accommodate uncertainties in the availability of sensors, and various delays. However, the failure of a sensor, for example, is a discrete event that requires introducing additional discrete states whose number increases exponentially with the number of failures, limiting the scalability of these methods.
    
Correct by construction methods can also be used to automatically explore a design space and find design points that automatically satisfy typical constraints such as maximum communication delay, and other physical constraints on the implementation platform. These methods encode the design problem as a set of constraints whose solution is an actual design. Optimization (or heuristic search) is used to find optimal solutions. The flexibility gained by formulating the design exploration problem as an optimization problem allows for incorporating margins, redundancy, and other preferences into the problem formulation. Examples can be found in our previous work \cite{pinto2007communication,leonardi2009case,leonardi2011synthesis} in the context of synthesis of computation and communication platforms for networked embedded systems. 

Additional measures can be designed to further reduce the impact of uncertainty on the operations of a cyber-physical system. Clearly, the environment can be designed in such a way to remove uncertainty. For example, warehouse robots such as the ones used by the Kiva system, rely on fiducial markers \cite{enright2011optimization}. The operations of a system can also be restricted so that uncertainty does not lead to safety violations. For example, a flight plan for a commercial flight must guarantee the presence of a suitable alternate airport within a certain time from any point in the plan under a single engine operation \cite{chiles2007etops}. 

\section{Autonomy as a driver of new classes of uncertainty}
\label{sec:autonomy-new-uncertainty}
There is strong interest in automating tasks that are today done by humans and that are considered dull, dangerous, where machines could perform better, or where humans represent a significant operating cost. Consider the case of commercial aviation. Two pilots are required for domestic flights, and three or more are required for longer flights. The need for pilots not only increases training and operating costs, but creates scheduling challenges as crews need rest time and cannot fly more than a certain number of hours per month. Moreover, a significant number of incidents are due to human error. For example, it is estimated that the cost of runway incidents is tens of billion dollars per year \cite{maurizio_anichini_solutions_2017}. 

Autonomy is not only a desire, but it is also a need in many application domains. 
For example, humans cannot match the pace of automation in assembly lines, in detecting and reacting to critical events, or in controlling many devices in parallel. 
In domains where applications cover a large physical space, individual systems find themselves far from other systems, and yet they must continue the execution of coordinated plans. 
While wireless communication technologies have improved, and bandwidth abound, there are still availability challenges that prevent the deployment of centralized architectures. 
This is for example the case of the envisioned MOSAIC warfare environment \cite{defense_advanced_research_projects_agency_darpa_darpa_nodate}, and the JADC2 framework \cite{jim_garamone_joint_nodate}. In these environments, communication (which is often contested) is a scarce resource. Similar problems exist even in the case of the Internet-of-Things due to the limited size and power of connected objects, and to the potential for attacks. 
Thus, nodes in these networks must be able to sense and understand their surrounding environment, make decisions that require fast response time locally, predict the global state of the system, and maintain the ability to coordinate within coalitions. \emph{Autonomy, then, is seen as one way to deal with disruption and long, unpredictable delays}. 

For these new class of systems, traditional tools to deal with uncertainty, such as the ones mentioned in Section \ref{sec:dealing-with-uncertainty} are no longer sufficient. They have been developed to deal with  uncertainty that is well-characterized such as noise in measurements, and equipment failures with known failure modes. Moreover, systems, even when distributed, are deployed in closed environments where communication can be implemented through reliable media and timing can be controlled. And finally, system integrators control the interfaces among sub-systems from different suppliers thereby guaranteeing some level of integrity of the information in and out of a system. In the case of autonomous systems, while these types of uncertainties are still present, epistemic uncertainty seems to be predominant. This type of uncertainty, is in principle reducible, albeit at a cost. 

The need for understanding the environment demands for complex sensors. An autonomous car, for instance, is equipped with LiDARs, cameras, radars, and ultrasonic sensors.  As already mentioned in Section \ref{sec:evolution-cps}, the mapping between the physical quantities measured by these sensors and logical facts useful for decision-making is not known in a form that can be analyzed and implemented. This means that, while locally a system still senses physical quantities such as the time of flight of a laser beam, it needs to be able to deduce high-level facts such as the intent of another actor in the environment. To overcome this problem, approximations of these functions are learned from data \cite{bengio2017deep}. The predictions made by these models are subject to both aleatoric and epistemic uncertainty \cite{hullermeier_aleatoric_2021} which can be reduced by increasing the complexity of the models and the data set used for training, but both have an impact on complexity, and cost and time to deployment. The NVIDIA Drive AGX Pegasus \cite{noauthor_nvidia_nodate}, a platform for self-driving cars, is in fact a super-computer, delivering hundreds of tera-operations per second, and consuming hundreds of watts of power. Additional reasoning is required to deduce facts, make predictions about the environment, and ultimately plan for actions. The  complexity of the reasoning and planning algorithms depends on the complexity of the representation and the number of actions that can be taken by the autonomous agent. Reducing such complexity requires abstraction which adds uncertainty as some information must be lost. Moreover, the world in which these systems operate is very dynamic, meaning that the rules according to which the environment behaves may change over time. This is due to other agents in the environment that learn and adapt continuously in response to the behavior exhibited by a system. This is different from the case of a cyber-physical system whose behavior is driven by physical laws that can be considered immutable. Finally, not only communications among agents is imperfect, but systems can be attacked in many ways (both on the cyber and physical side \cite{eykholt_robust_2018}), meaning that the information they receive through the network or even through sensing of the physical world may be corrupted by malicious agent. These sources of epistemic uncertainty can be reduced by using complex models, complex communication schemes, or even by deploying infrastructure in the environment to simplify perception and situational assessment, but these solutions have an impact on the complexity and cost of the computing and communication platforms.

The additional complexity of autonomous cyber-physical systems brings new challenges to systems engineers. Guaranteeing availability through physical redundancy requires replication of expensive sensors and hardware modules. Falling back on less performing sensors or hardware is a potential solution if the system can be shown safe under the resulting lack of information. The identification of corrupted information requires additional sensors and complex models. The ability to guarantee a timely reaction to deal with contingencies or changes in objectives is not always possible as the running time of some tasks, such as optimization algorithms, is hard to predict and the worst case execution time hard to obtain or too conservative. The software implementation of the functionality of an autonomous system is hard to assure. The programming languages and data structures used in these applications, the heavy reuse of large third-party library, and the mapping over heterogeneous multi-processor platforms, render verification a challenging task. As a consequence, not only bugs may go undetected in the final product, but the code may fail in subtle ways. 

The discussion points in this section suggest that autonomous CPS must be designed to deal with epistemic uncertainty arising from abstraction, approximation, and cost constraints. The design methods should explicitly capture lack of knowledge and its impact on the possible behaviors of a system. It is not necessary for an autonomous agent to have perfect knowledge about the environment, but rather enough knowledge to make the right decisions so that its behaviors achieve desired goals while satisfying given constraints. Given the inherent cost of removing epistemic uncertainty, we are interested in the requirement generation process which drives functional design and implementation.

\section{The key elements of a new modeling paradigm}
\label{sec:two-key-elements-of-new-modeling}
We first introduce a functional architecture that defines what needs to be modeled and helps to drive the process of requirement generation for the different elements of an autonomous cyber-physical system. We then propose a formal and compositional modeling framework based on multi-modal logic which we motivate based on few observations.

\noindent
\textbf{Testing alone seems to be impractical for autonomous systems.} A report from the RAND Corporation \cite{kalraDrivingSafetyHow2016} estimates that the number of miles to be driven by an autonomous vehicle to prove with confidence that the fatality rate is comparable to human-driven vehicles is 10 billion. The total number of miles driven by the Tesla autopilot from 2015 to 2020 is estimated to be 3.3 billion \footnote{\url{https://lexfridman.com/tesla-autopilot-miles-and-vehicles/}}. This statistics shows that it is impractical to rely on testing only. For this reason, we advocate the use of a formal framework that can complement testing in providing assurance guarantees.

\noindent
\textbf{Applications will span multiple autonomous systems operating under many types of uncertainty.} The specification language used in the modeling framework should be able to represent probabilistic and epistemic uncertainty in a system with multiple agents. Each agent holds its own understanding of the world which is defined formally by logical statements. For this reason, we rely on \emph{modal logic} where the truth of a statement about the world can be qualified by expressions such as ``agent $a$ believes that'', or ``according to agent $a$, the probability of $\phi$ is at least $\alpha$''. Other modalities can also be included in the language such as temporal modalities. In fact, we will introduce a multi-modal first order logic language for the specification of component interfaces.

\noindent
\textbf{Autonomous systems will result from composition and evolution of capabilities.} The development of complex autonomous systems requires integration of many components and capabilities such as sensors, situational assessment, decision-making, actuation, and communication. These components may be provided by different suppliers. Moreover, systems supporting complex operations such as Urban Air Mobility \cite{UrbanAirMobility2020} will feature several stakeholders such as vehicle developers, fleet operators, and service providers. Finally, autonomous capabilities will be deployed incrementally as new technologies become available and as operational data provide insights on the environment, needs, and limitations of systems. 

Thus, we seek a modeling framework that enables modular, scalable, and compositional analysis and design. This is a strong motivation to adopt a \emph{compositional modeling framework} where components are modeled using a formal specification of their interfaces. These interfaces specify what a component is supposed to deliver (the guarantees) in a give set of environments (the assumptions). Thus, our focus is on requirements as they drive implementation cost.
When components are specified at a level of abstraction that is detailed enough to be implemented, we discuss some promising \emph{uncertainty quantification} methods approaches that can help in verifying that the implementation is compliant with the specified contract.

\subsection{High-Level functional architecture}
\label{sec:high-level-architecture}

\begin{figure}
    \centering
    \includegraphics[width=0.7\textwidth]{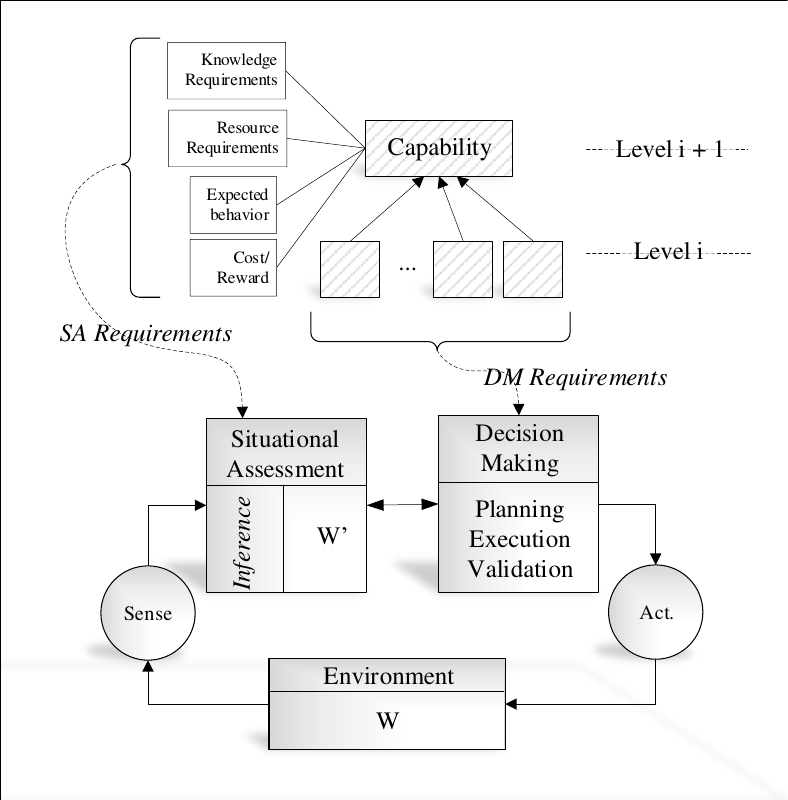}
    \caption{Overview of the high-level functional architecture of an autonomous systems (bottom), the capability model, and the requirement allocation process (top). \label{fig:method}}
\end{figure}

In this section we describe a methodology to drive requirement definition for autonomous systems. 
We do not address a comprehensive list of requirements such as operational requirements, maintainability, sustainability, security, cost and durability. We also don't address completeness\footnote{Completeness of requirements is in general difficult to define. Informally, a set of requirements ``is complete to the extent that all of its parts are present, and each part is fully developed'' \cite{boehm1984verifying}.}. 
Rather, the focus of this section is on requirements that are generated from an analysis of the functional architecture of a typical system as shown in Figure \ref{fig:method}. 
A system is a collection of agents, but the abstraction level at which we discuss requirements seats above such internal structure. 
Independently of the number of agents, an autonomous system has a natural functional decomposition which includes a \emph{situational assessment} function, and a \emph{decision-making} function (see the early work in \cite{endsley1995toward}, and our recent work in \cite{pinto2019open} among many others). 
The situational awareness function makes sense of inputs to create an understanding of the current state of the world, and to make predictions over a given time horizon. The decision-making function, instead, decides actions to take towards exhibiting certain behaviors and/or bringing about certain states.

At the fundamental level, an autonomous agent (Figure \ref{fig:method}) partially senses the environment state $w \in W$, and uses models to create an internal representation  $w' \in W'$. The situational assessment function establishes a relation between the two, that qualitatively can be expressed as $R_{SA} \subseteq W \times W'$ (and that will be discussed in more detail in Section \ref{sec:compositional-framework}). Each estimate $w' \in W'$ corresponds to potentially many actual situations $w \in W$. Because decision-making has only access to the internal representation of the world, defining requirements on $R_{SA}$ becomes essential. Notice that the estimate $w'$ does not need to be accurate, but only accurate enough to guarantee good decisions. The definition of such requirements, therefore, should start from an analysis of the decision-making process and the types of capabilities that the agents may decide to use.

To define capabilities and decision-making, we find inspiration in the capability approach from economics \cite{capframework}. A capability represents a potential that an agent has to realize a behavior or to achieve a certain state. A capability can be used only in certain contexts, namely when the agent has enough resources and is confident that the capability will succeed in its intent. For example, the capability of moving without collisions requires access to a map of the environment, and enough power to mechanically move the agent. Decision-making is the process of selecting the capability to use in a given state of the environment to achieve goals (that can be desired behaviors of states of the environment). Capabilities can be structured hierarchically where a group of capabilities at level $i$ can be leveraged together in such a way to deliver a capability at level $i+1$. For example, the capability of starting a car, moving both forward and backwards, avoiding obstacles, and following driving rules, can be used to drive from one location to another (although it should be proved that this is indeed the case).

At the lowest level of this hierarchy, we find capabilities that interface directly with the physical world. These capabilities are managed by control agents that provide a set of dependable automation functions which we will call \emph{skills}. Skills have a known set of failure modes that can be modeled and exposed to the next level of the decision hierarchy, and taken into account during the planning process. The time-scale at which skills operate is small enough that predictions can be made with a known degree of accuracy. Examples of skills include low-level controls such as the trajectory following capability of an autopilot.

At the higher-levels of the decision hierarchy we find strategic planning. At this level, the representation of the state space tends to be large and unstructured, time scales tend to become longer, and the actions of other actors must be taken into account. As a result, it is in general harder to provide accurate estimates or projections about the world.

Decision-making includes planning, execution, and run-time validation. In general, planning results in policies (i.e., when and where to use capabilities) that are executed by the agent, which also uses run-time validation to assess how plan execution is going, and forecast potential issues that may arise in the future. There are two sets of requirements that can be derived from the analysis of the capability model. First, the agent \emph{must know} if a capability can be used, and if and when it achieves its intended goals -- both essential for plan execution, monitoring and validation. These requirements directly apply to situational awareness. Secondly, the agent must be able to deliver a capability at level $i$ using capabilities at level $i+1$. This second set of requirements directly apply to decision-making which has the freedom to select policies using $i+1$-th level capabilities. Feasibility in all conditions as expressed by the $i$-th level capability, as well as optimality while respecting constraints must both be verified. Notice that these requirements are in general hard to verify because it is common to rely on the shape of the cost functions, and on the requirements for using a capability to enforce safety constraints. Verification, therefore, must check that under all conditions, and all applicable goals, the computed plan is valid.

\begin{example}[Obstacle avoidance scenario]
    In this example we capture informally the capabilities of the use-case described in the introductory chapter of this collection. The top-level capability $C^{DriveTo}$ requires knowledge of the current vehicle state $\hat{z}_k$, the goal location $g$, and a $map$ of the environment containing location and bounding box of obstacles. The expected behavior is to follow a trajectory that is collision free and that arrives at $g$. 
    
    The lower-level capabilities is a set comprising a trajectory optimizer using a kinematic model $C^{KM}$, a trajectory optimizer using a dynamic model $C^{DM}$, and a hybrid supervisor $C^{Sup}$. Both the kinematic model and the dynamic model require knowledge of $\hat{z}_k$, $g$, and $map$. The expected behavior of $C^{KM}$ is to finish its computation in $S_k^{KM}$ time steps, and to provide the sequence of optimal controls $U_k^{KM}$ according to the kinematic model $KM$. Similarly, The expected behavior of $C^{DM}$ is to finish its computation in $S_k^{DM}$ time steps, and to provide the sequence of optimal controls $U_k^{DM}$ according to the kinematic model $DM$. The hybrid supervisor requires knowledge of $\hat{z}_k$, $g$, $map$, $S_k^{KM}$, $S_k^{DM}$, and the actual trajectories $trj_k^{KM}$ and $trj_k^{DM}$ that the vehicle would follow when using $U_k^{KM}$ or $U_k^{DM}$, respectively. The expected behavior of $C^{Sup}$ is to select the control to be applied in order to avoid collisions and to reach the goal. 
    
    From this set of capabilities, we can derive situational awareness requirements. The internal world model is a tuple consisting of the vehicle state (position, velocity, heading, and steering angle), the map of the environment, and the trajectory that the vehicle follows under a given control sequence. The external world model has the same variables, and the situational awareness module must guarantee that the two are close enough. Such module will need to be refined as including perhaps a localization sub-system, a camera or Lidar sensor together with a mapping function that maintains an accurate map of obstacles, and a prediction module accurate enough to be used as reference model. 
    
    Given these capability specifications, components can be implemented to satisfy them. For example, the work in  \cite{125} proposes implementations for $C^{KM}$, $C^{DM}$ and $C^{Sup}$ where the first two use model-predictive control, and the last is a hybrid supervisor with a decision boundary between the two optimizers which depends on the velocity and steering angle of the vehicle. 
    
    There are several challenges that this informal example exposes. First, a formal language should be used to avoid ambiguities in the meaning of terms such as ``collision free'', and ``knowing something''. We also did not include robustness against uncertainty for example in localization or in the construction of the map. Assuming that such language exists, a proof should be provided that 
    $C^{KM}$, $C^{DM}$ and $C^{Sup}$ together satisfy the top-level capability $C^{DriveTo}$. Furthermore, a proof should be provided that the implementation of each component (such as the ones proposed in \cite{125}) is correct. These proof obligations should also be formally defined as verification problems. Section \ref{sec:compositional-framework} addresses the formalization of the specification language and the definition of the verification problems. 
    \end{example}

In addition to situational assessment and decision-making requirements, several other common requirements may need to be derived and assigned to agents, such as the ability to understand goals, to explain the reasons for failing to find a plan (or to achieve a goal), and to maintain safety in any state of the world. Furthermore, validation and verification of a plan at run-time should not be based on the same models that are used in planning which could otherwise lead to common-mode failures. Independence requirements between planning and run-time verification should be considered. In the case of learning agents, several requirements should be included such as the ability to identify good events to learn from, to compute the extent of the revisions of the internal knowledge of the agent, to perform the update, and to safely resume operations. The overall requirement is that the system improves, meaning that its performance improves at least in some environments, while safety guarantees remain unaffected.

Both situational awareness and decision-making functions result in general from the composition  of several sub-systems. For example, consider an autonomous vehicle. The situational assessment function will comprise components for dealing with road signs, pedestrians, and vehicle health. Each component will work under some assumptions on the environment such as good visibility. The decision-making function will comprise components for controlling comfort inside the vehicle, the entertainment system, and the trajectory of the car. These components will only work under assumptions on the health of the mechanical equipment, and the condition of the road. Initial versions of an autonomous system may only include a limited set of capabilities, to be expanded by adding additional functions in later versions. Furthermore, several teams may contribute to the design of a system, with teams perhaps even belonging to different companies. Incremental deployment and multiple suppliers have to be supported by a corresponding incremental analysis and design methodology.

In the next section we introduce a formal framework to support the modeling, analysis, and design approach described in this section.

\subsection{Compositional modeling framework and specification language}
\label{sec:compositional-framework}
We will use a contract-based approach \cite{benveniste_contracts_2018} to develop our compositional framework, and we will adopt the formalism in  \cite{bauer_moving_2012} which we briefly summarize here. A specification theory is a triple $(\mathcal{S},\otimes,\leq)$, where $\mathcal{S}$ is a set of specifications, $\otimes : \mathcal{S} \times \mathcal{S} \rightarrow \mathcal{S}$ is a parallel composition operator over specifications, and $\leq \subseteq \mathcal{S} \times \mathcal{S}$ is a reflexive and transitive refinement relation.  
The refinement relation must be compositional, meaning that given specifications $S$, $S'$, $T$ and $T'$, whenever $S\leq S'$ and $T \leq T'$, then $S \otimes T \leq S' \otimes T'$. A conjunction operator $\wedge : \mathcal{S} \times \mathcal{S} \rightarrow \mathcal{S}$ is used to combine specifications. The conjunction of two specifications $S$ and $T$, written $S\wedge T$, if it exists, is the most general specification that realizes both.
In some cases, a specification theory can be equipped with a quotient operator $/:\mathcal{S} \times \mathcal{S} \rightarrow \mathcal{S}$. Given two specifications $S$ and $T$, $T/S$, if it exists, is the most general specification such that $S\otimes (T/S) \leq T$.
Finally, relativized refinement is a ternary relation in $\mathcal{S} \times \mathcal{S} \times \mathcal{S}$ on specifications. 
Given three specifications $S$, $E$, and $T$, $S \leq_E T$ if and only if for all $E' \leq E$, $S\otimes E' \leq T \otimes E'$. 
In words, $S$ refines $T$ when both are considered in an environment $E'$ that refines $E$. 
It can be shown that relativized refinement is a preorder, meaning that for all specifications $S$, $E$, $E'$, and $T$, $E\leq E' \ and \ S\leq T \Rightarrow S \leq_{E'} T$.

A contract theory is built over a specification theory.
Specifically, a contract is a pair $C=(A,G)$ where $A$ and $G$ are specifications called assumption and guarantee, respectively. 
The environment semantics of a contract is the set of specifications that refines the assumption:
$\meaningof{C}_{env} = \{ E \in \mathcal{S} | E \leq A \}$. 
The implementation semantics of a contract is the set of specifications that satisfy the guarantee $G$ under assumption $A$: $\meaningof{C}_{impl} = \{ I \in \mathcal{S} | I \leq_A G \}$. 
Two contracts are semantically equivalent if their environment and implementation semantics are the same, respectively.
Moreover, a contract $C$ is in normal form if for all specification $I \in \mathcal{S}$, $I \leq_A G$ if and only if $I \leq G$, which also means that for a contract $C^{nf}=(A^{nf},G^{nf})$ in normal form, $\meaningof{C^{nf}}_{impl} = \{ I \in \mathcal{S} | I \leq G^{nf} \}$. 
In some cases, depending on the specification theory,  a contract $C=(A,G)$ can be transformed in a semantically equivalent contract $C^{nf}=(A,G^{nf})$ in normal form by weakening its guarantee\footnote{Under some assumptions on the specification theory, the normal form of the guarantee of a contract can be computed as $G^{nf}=G\wedge \neg A$ (where $\neg A$ is the set of environments that do not refine $A$). In this case, a contract in normal form is also called saturated.}.

Contracts are related by a refinement relation $\preceq$. A contract $C'=(A',G')$ refines $C=(A,G)$ if and only if $\meaningof{C'}_{env} \supseteq \meaningof{C}_{env}$, and $\meaningof{C'}_{impl} \subseteq \meaningof{C}_{impl}$. 
It can be shown that this condition corresponds to $A \leq A'$ and $G' \leq_A G$, and if contracts are in normal form, then the latter can be written as $G' \leq G$. 

Contracts can be composed to yield a new contract. Let $C_1 = (A_1, G_1^{nf})$ and $C_2 = (A_2,G_2^{nf})$ be two contracts, then their composition is $C_1 \boxtimes C_2 = ( A_1/G_2^{nf} \wedge A_2/G_1^{nf}, G_1^{nf} \otimes G_2^{nf})$. Finally, the quotient between $C_1$ and $C_2$, denoted $C_1 / C_2$ is defined as follows: $C' \preceq C_1/C_2 \Leftrightarrow C' \boxtimes C_2 \preceq C_1$. The quotient can be used to synthesize the specification of a missing component\cite{romeo2018quotient}.

It is sometime easier to prove whether a set of contracts $\{C_1,\ldots,C_n\}$ is a correct decomposition of a contract $C$. This can be shown \cite{le_contract-based_2016} to be equivalent to proving that:

\begin{align}
\bigwedge_{1\leq i \leq n} G_i^{nf} \leq G^{nf} & & \label{eq:gnf-ref}\\
A \wedge \bigwedge_{1 \leq j\neq i \leq n} G_j^{nf} \leq A_i,& &\forall i \in [1,n] \label{eq:ai-ref}
\end{align}

We now need to define our specification theory. 
We use a multi-modal first-order language that allows reasoning about knowledge and probabilities and that takes inspiration from \cite{fagin1994reasoning}. 
The use of a first order language allows us to also reason about objects, their properties, and their relationships which is convenient for autonomous systems. 
Let $\Sigma=(C,P,V)$ be a first order signature with the usual set of constant, predicate and variable symbols. 
We assume that the reader is familiar with the syntax and semantics of a first-order language \cite{fitting_first-order_2012,fitting_first-order_2012-1}. Also, let $\mathcal A$ be a set of agent symbols. The syntax of the specification language that we will use is recursively defined as follows:
\[
\phi := P^k(t_1,\ldots,t_k) | \neg \psi | \psi \wedge \psi'  | \forall v. \psi | \mathsf{K}_a \psi | \sum_{i=1}^{l} q_i \mathsf{P}_{a_i} \psi_i \leq b
\]
where a term $t_i$ is a variable or a constant, $P^k \in P$ is a predicate symbol of arity $k$, $\psi$ and $\psi'$ are formulas, $v \in V$ is a variable, 
$q_i$ and $b$ are rational numbers, and both $a$ and $a_i$ are agent symbols in $\mathcal A$. A formula $\mathsf{K}_a \psi$ denotes that agent $a$ knows that $\psi$, while  $\mathsf{P}_{a} \psi$ denotes the probability that agent $a$ assigns to $\psi$. 

The semantics of a formula in this language is the set of Kripke structures that satisfy the formula. A Kripke structure for a multi-agent system is a tuple $M = (W,\mathcal A,\{\sim_a\}_{a \in \mathcal A},\{P_a\}_{a \in \mathcal A},D,I)$ where $W$ is a set of possible worlds, $\sim_a \subseteq W \times W$ is the accessibility relation for agent $a$, $P_a$ associates to each possible world $w \in W$ a probability space $(W_{a,w},\mathcal F_{a,w}, \mu_{a,w})$,  $D$ is a fixed domain of objects, and $I$ is a function that maps each possible world $w \in W$ to a first order interpretation $I^w$ of the symbols in $\Sigma$. The satisfaction relation $\models$ that relates a Kripke structure $M$ and a world $w \in W$ to the formulas that are satisfied by the model is defined  as follows:

\begin{align}
(M,w) & \models P^k(t_1,...,t_k) & &iff & &(I^w(t_1),\ldots I^w(t_k)) \in I^w(P^k) \\
(M,w) &\models \neg \psi & &iff  & &(M,w) \not\models \psi \\
(M,w) &\models \psi \wedge \psi' & &iff & &(M,w) \models \psi \ and \ (M,w) \models \psi' \\
(M,w) &\models \forall v.\psi & &iff & &\forall o \in D, \ (M,w) \models \psi(v\leftarrow o) \\
(M,w) &\models \mathsf{K}_a \psi & &iff & &\forall w'\in W, \ w \sim_a w' \Rightarrow (M,w') \models \psi \\
(M,w) &\models \sum_{i=1}^{l} q_i \mathsf{P}_{a_i} \psi_i \leq b & &iff & & \sum_{i=1}^{l} q_i \mu_{a,w}( W_{a,w}(\psi_i) ) \leq b
\end{align}

Where $W_{a,w}(\psi) = \{w \in W_{a,w} | (M,w) \models \psi\}$, namely the set of worlds in $W_{a,w}$ that satisfy $\psi$. As discussed in \cite{fagin1994reasoning}, we can assume $W_{a,w}$ to be a subset of the set of possible worlds that the agent considers possible in $w$ (consistency). We will also assume that the probability spaces are the same for all agents, i.e. $P_{a_1} = P_{a_2}$ for all $a_1,a_2 \in \mathcal A$ (objectivity), and that for all $a$, $w$ and formula $\psi$, $W_{a,w}(\psi) \in \mathcal F_{a,w}$, meaning that all formulas define measurable sets. 

We define our specification theory as follows. A specification is a pair $S = (\Sigma, \phi)$ where $\Sigma$ is a first order signature, and $\phi$ is a formula in the specification language defined above over the signature $\Sigma$. The set of such specifications is denoted $\mathcal S_{KP}$. Given two specifications $S_1=(\Sigma_1, \phi_1)$ and $S_2=(\Sigma_2, \phi_2)$, their composition is $S_1 \otimes_{KP} S_2 = (\Sigma_1 \cup \Sigma_2, \phi_1 \wedge \phi_2)$. Moreover, $S_1 \leq_{KP} S_2$ if an only if $\Sigma_2 \subseteq \Sigma_1$, and $\phi_1 \models \phi_2$. 

\begin{proposition}
$(\mathcal S_{KP},\otimes_{KP},\leq_{KP})$ is a specification theory.    
\end{proposition}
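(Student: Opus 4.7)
The plan is to verify the three structural requirements in the definition of a specification theory: that $\otimes_{KP}$ is a well-defined binary operation on $\mathcal S_{KP}$, that $\leq_{KP}$ is a preorder (reflexive and transitive), and that $\leq_{KP}$ is compositional with respect to $\otimes_{KP}$. Each of these reduces to standard facts about set-theoretic union, propositional conjunction, and semantic entailment, so the work is mostly bookkeeping rather than deep.

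First I would observe that $\otimes_{KP}$ is well-defined: the union of two first-order signatures is again a first-order signature, and $\phi_1\wedge\phi_2$ is a formula of the multi-modal language over $\Sigma_1\cup\Sigma_2$, so the composed pair lies in $\mathcal S_{KP}$. Reflexivity of $\leq_{KP}$ is immediate since $\Sigma\subseteq\Sigma$ and $\phi\models\phi$. For transitivity, if $S_1\leq_{KP}S_2$ and $S_2\leq_{KP}S_3$, then $\Sigma_3\subseteq\Sigma_2\subseteq\Sigma_1$ by transitivity of set inclusion and $\phi_1\models\phi_3$ by transitivity of semantic consequence.

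For the compositionality condition, suppose $S_i\leq_{KP}S_i'$ for $i\in\{1,2\}$, i.e., $\Sigma_i'\subseteq\Sigma_i$ and $\phi_i\models\phi_i'$. The signature part follows from monotonicity of union: $\Sigma_1'\cup\Sigma_2'\subseteq\Sigma_1\cup\Sigma_2$. For the formula part, I would note that $\phi_1\wedge\phi_2\models\phi_1$ and $\phi_1\wedge\phi_2\models\phi_2$ by the semantic clause for $\wedge$, and combining with $\phi_i\models\phi_i'$ and transitivity of $\models$ yields $\phi_1\wedge\phi_2\models\phi_1'$ and $\phi_1\wedge\phi_2\models\phi_2'$, whence $\phi_1\wedge\phi_2\models\phi_1'\wedge\phi_2'$. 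Together these give $S_1\otimes_{KP}S_2\leq_{KP}S_1'\otimes_{KP}S_2'$.

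The only point that requires any care, and that I expect to be the main (minor) obstacle, is fixing how $\models$ is interpreted when the two formulas involved live over different signatures. The natural convention, compatible with the Kripke semantics given earlier, is that $\phi\models\phi'$ means every Kripke structure (over any signature containing both symbol sets) that satisfies $\phi$ also satisfies $\phi'$; equivalently, one works in a common extended signature and uses the fact that the clauses for the knowledge modality $\mathsf K_a$ and the probability modality $\mathsf P_a$ refer only to accessibility relations and measurable sets of worlds, which are unaffected by enlarging the signature with symbols that do not occur in the formula. Once this convention is fixed, the chain of entailments in the compositionality step goes through unambiguously, and the three axioms of a specification theory are all verified.
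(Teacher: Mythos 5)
Your proof is correct and follows essentially the same route as the paper's: reflexivity and transitivity are dispatched as immediate, and compositionality is established via monotonicity of signature union together with the entailment $\phi_1\wedge\phi_2\models\phi_1'\wedge\phi_2'$ obtained from $\phi_i\models\phi_i'$. Your extra remark about fixing the convention for $\models$ across differing signatures is a reasonable clarification the paper leaves implicit, but it does not change the argument.
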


\begin{proof}
It is easy to show that the refinement operator is reflexive and transitive. It remains to show that it is compositional. Let $S=(\Sigma_S,\phi_S)$, $S'=(\Sigma_S',\phi_S')$, $T=(\Sigma_T,\phi_T)$, and $T'=(\Sigma_T',\phi_T')$ be specifications such that $S \leq_{KP} S'$ and $T \leq_{KP} T'$. Then, $\Sigma_S' \subseteq \Sigma_S$, $\phi_S \models \phi_S'$, $\Sigma_T' \subseteq \Sigma_T$, and $\phi_T \models \phi_T'$ hold (premise). Also, we have that $S \otimes_{KP} T = (\Sigma_S \cup \Sigma_T,\phi_S \wedge \phi_T)$, and $S '\otimes_{KP} T' = (\Sigma_S' \cup \Sigma_T',\phi_S' \wedge \phi_T')$. From the premise, we can deduce that $\Sigma_S' \cup \Sigma_T' \subseteq \Sigma_S \cup \Sigma_T$, and that $\phi_S \wedge \phi_T \models \phi_S' \wedge \phi_T'$, which implies that $S \otimes_{KP} T \leq_{KP} S' \otimes_{KP} T'$.
\end{proof}

In the rest of this section, we will consider the same universal signature for all specifications. From this specification theory, we can build a contract theory as already described in the introductory notes to this section.

The last key element that we need to explore is the modeling methodology for autonomous multi-agent systems. The first order universal signature $\Sigma$ is partitioned into sub-signatures $\Sigma_{a_1},\ldots,\Sigma_{a_n}$ for each agent, such that $(C_a,P_a,V_a)$ is the signature for agent $a$. This signature contains the symbols that are used by the internal world model of the agent. The accessibility relation $\sim_a$ of an agent $a$ is defined as follows: $(w,w') \in \sim_a$ if and only if for all $s \in C_a \cup P_a \cup V_a$, $I^w(s) = I^{w'}(s)$. This means that in the two possible worlds $w$ and $w'$, the internal world model of the agent is the same, and therefore the agent cannot distinguish them. It is straightforward to prove that this accessibility relation  is reflexive, symmetric, and transitive (characteristic properties of the so called $S5$ system \cite{fagin_reasoning_2003}).

Equipped with this compositional framework that supports reasoning about knowledge and probability, and with a methodology to model autonomous systems, we can now present an example to show how it can be applied to the requirement analysis and generation phase of a design process. 

\begin{example}
Consider an unmanned aerial platform $u$ for urban air mobility arriving in proximity of a vertiport $p$ and having to perform a safe landing as shown in Figure \ref{fig:example_a}. The airspace is under the supervision and control of a local operator $c$. The environment $e$ includes different weather conditions. The vehicle is equipped with an on-board localization system that relies on the Global Navigation Satellite System (GNSS). In addition, a ground system also estimates the position of the vehicle which is communicated to $u$ through a communication link (LNK). 

\begin{figure}
    \centering
    \includegraphics[width=0.7\textwidth]{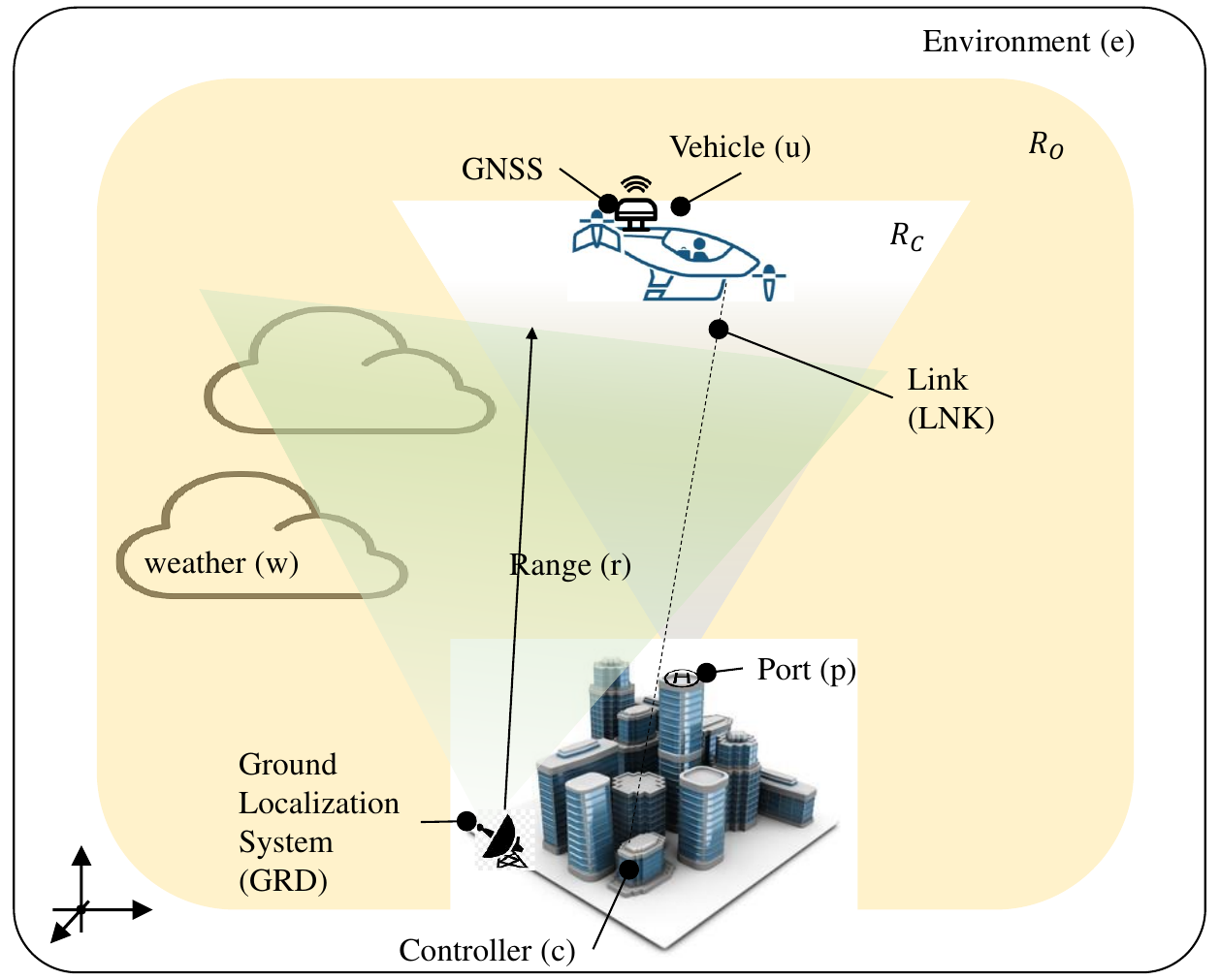}
    \caption{A scenario representing a vehicle tasked with landing at a local vertiport. The landing task is supported by situational awareness capabilities including a GNSS system on board, a ground support system (GRD), and a communication link between the vehicle and the local controller. \label{fig:example_a}}
\end{figure}

\begin{figure}
    \centering
    \includegraphics[width=0.8\textwidth]{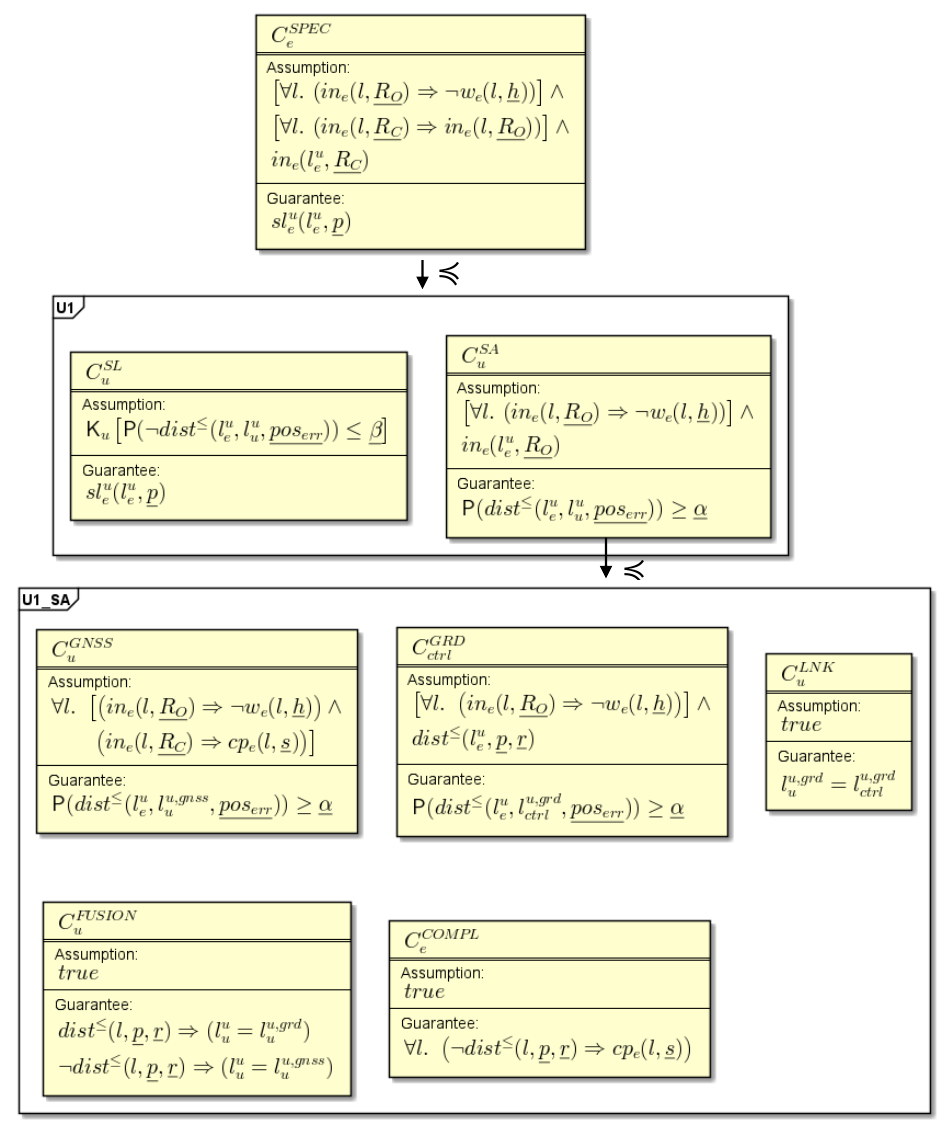}
    \caption{A compositional model of the scenario in Figure \ref{fig:example_a}. The top-level specification for safe landing in non-harsh weather conditions is refined into a situational awareness contract and a safe landing contract for the vehicle. The situational awareness contract is further refinement into the composition of contracts for the environment, the ground support, the on-board localization system, and the communication link. \label{fig:example_b}}
\end{figure}

In this example, we have a set of agents $\mathcal A$ = $\{u,e,c\}$ that denote a UAV, the environment, and a local operator, respectively. Symbols with no subscript have the same interpretation across agents,  and underlined symbols represent constants. 
Let us focus on a single top-level requirement modeled by contract $C_e^{SPEC}$. In order to model such requirement, we introduce a few symbols to reason about weather conditions, and to capture the goal of a safe landing. Let $w_e$ be a binary predicate such that $w_e(l,x)$ is true if the weather condition at location $l$ is $x$. We also introduce a few constants as follows: $\underline{u}$ denotes the vehicle, $\underline{p}$ denotes the vertiport, and $\underline{s}$, $\underline{m}$, and $\underline{h}$ are three constants used as a discrete representation of environment difficulty (simple, moderate, or hard, respectively). These levels will be used both for weather conditions and environment complexity (such as level of clutter or presence of obstacles). Complexity is modeled by a binary predicate $cp_e$, such that $cp_e(l,x)$ is true if the environment complexity at location $l$ is $x$. In addition, we assume that both $w_e$, and $cp_e$ are functional predicates, meaning that for each location $l$, there exists one and only one level $x$ ($x'$) such that $w_e(l,x)$ ($cp_e(l,x')$) is true. The approach region is denoted by $\underline{R_O}$, and the final approach region is denoted by $\underline{R_C}$. Predicate $in_e(l,r)$ is true if location $l$ is in region $r$. Finally, we use variables to model the location of the vehicle. For example, we will denote by $l_e^u$ the true location in the environment, and by $l_u^u$ the location estimate according to $u$. 

We can now state contract $C_e^{SPEC}$ as shown in Figure \ref{fig:example_b}. The assumption of this contract states several things: (1) that the weather conditions are not hard to handle in the approach region, (2) that the final approach region is inside the approach region, and (3) that the vehicle is in final approach. The system must guarantee a safe landing at the vertiport starting from its current location. Notice that the symbols used in this constraint belong to the environment because the requirement it represents must be satisfied in the real world.

According to the methodology described in Section \ref{sec:two-key-elements-of-new-modeling},  we decompose the specification of the vehicle into two contracts: a safe landing capability $C_{u}^{SL}$, and a situational awareness contract $C_{u}^{SA}$. The safe landing capability captures the ability to generate plans, to execute them, and to change them as needed during execution. To limit the complexity of this example, we only discuss assurance on the localization of the vehicle. To execute a flight plan correctly, the safe landing capability needs to be confident that its internal location estimate is good enough under all possible conditions. The assumption can then be modeled as\footnote{We have introduced a ternary predicate $dist^{\leq}$ such that $dist^{\leq}(l,l',d)$ is true if the distance between locations $l$ and $l'$ is less than or equal to $d$ (we are omitting the full axiomatization of such predicate as it is not needed for the purpose of this example).} $\mathsf K_{u} \mathsf{P}( \neg dist^{\leq}(l_e^u,l_u^u,\underline{pos_{err}}) ) \leq \underline \beta$: the vehicle knows that the probability that the distance between the true location and the estimate is greater than an acceptable position error $\underline{pos_{err}}$ is less than a constant $\underline{\beta}$. Under this assumption, the capability guarantees a safe landing at the vertiport. 

The situational awareness contracts $C_{u}^{SA}$ captures a requirement to maintain good localization with high probability. The assumption is that the weather condition is not hard in the approach region, and that the vehicle is in somewhere in such region. The guarantee is that the probability that the vehicle's position estimate $l_u^u$ is within $\underline{pos_{err}}$ from the actual position $l_e^u$, is greater than a constant $\underline{\alpha}$. As described in Section \ref{sec:two-key-elements-of-new-modeling}, this contract captures a constraint on the relation between the external (true) world and the internal estimate generated by agent $u$. 

\begin{proposition}
If  $1 - \underline{\alpha} \leq \underline{\beta}$, the set  $\{C_{u}^{SL},C_{u}^{SA}\}$ refines $C_e^{SPEC}$. \label{prop:refinement-sa-dm}
\end{proposition}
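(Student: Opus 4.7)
The plan is to invoke the decomposition conditions (\ref{eq:gnf-ref})--(\ref{eq:ai-ref}) with $n=2$, $C_1 = C_u^{SL}$, $C_2 = C_u^{SA}$, and $C = C_e^{SPEC}$. This reduces the proposition to three semantic entailments in $\mathcal S_{KP}$, each of which I would discharge by unfolding $\models$ over an arbitrary Kripke structure that satisfies the left-hand side.

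The guarantee obligation $G_{SL}^{nf} \wedge G_{SA}^{nf} \models G_e^{SPEC}$ is immediate, since the safe-landing-at-vertiport guarantee of $C_u^{SL}$ is syntactically the same formula as the guarantee of $C_e^{SPEC}$ and so the conjunction trivially entails it. The cross-assumption obligation for $C_u^{SA}$, namely $A_e^{SPEC} \wedge G_{SL}^{nf} \models A_{SA}$, is pure first-order reasoning on the environment symbols: $A_e^{SPEC}$ supplies the conjunct ``weather is not hard in $\underline{R_O}$'' directly, and combines $in_e(l_e^u,\underline{R_C})$ with $\underline{R_C}$ being inside $\underline{R_O}$ to produce $in_e(l_e^u,\underline{R_O})$; together these give $A_{SA}$, and the $G_{SL}^{nf}$ conjunct is not even needed.

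The real work, and the only place where the hypothesis $1 - \underline{\alpha} \leq \underline{\beta}$ enters, is the third obligation $A_e^{SPEC} \wedge G_{SA}^{nf} \models A_{SL}$, which must derive the epistemic statement $\mathsf{K}_u\,\mathsf{P}(\neg\,dist^{\leq}(l_e^u,l_u^u,\underline{pos_{err}})) \leq \underline{\beta}$ from the plain probabilistic statement $\mathsf{P}(dist^{\leq}(l_e^u,l_u^u,\underline{pos_{err}})) \geq \underline{\alpha}$. The arithmetic part is routine: at any world $w$, additivity of $\mu_{u,w}$ on complementary measurable sets yields $\mathsf{P}(\neg\,dist^{\leq}(\ldots)) \leq 1 - \underline{\alpha}$, which the hypothesis tightens to $\underline{\beta}$. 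The delicate step is lifting this bound under $\mathsf{K}_u$. Here I would argue that because $G_{SA}^{nf}$ is part of a specification, it must hold globally in every Kripke model realizing $C_u^{SA}$, in particular at every world $w' \sim_u w$; combined with the standing objectivity assumption on the probability spaces and the $S5$ character of $\sim_u$ (reflexive, symmetric, transitive, established earlier for the SA-style accessibility relations), the probabilistic bound is invariant along $\sim_u$ and therefore the knowledge modality can be introduced. This ``global-to-modal'' lifting is the main obstacle in the write-up, since it is the one point where the argument depends on a semantic convention about the scope of specification formulas rather than on syntactic manipulation.
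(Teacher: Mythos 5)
Your proposal is correct and follows essentially the same route as the paper's proof: the same three decomposition obligations, with the cross-assumption for $C_u^{SA}$ discharged by $A_e^{SPEC} \Leftrightarrow A_u^{SA}$, the guarantee obligation resting on $G_u^{SL} \Leftrightarrow G_e^{SPEC}$, and the only substantive step being the derivation of $\mathsf{K}_u\,\mathsf{P}(\neg dist^{\leq}(l_e^u,l_u^u,\underline{pos_{err}})) \leq \underline{\beta}$ from $\mathsf{P}(dist^{\leq}(l_e^u,l_u^u,\underline{pos_{err}})) \geq \underline{\alpha}$ via complementation and $1-\underline{\alpha}\leq\underline{\beta}$. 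Your semantic justification for introducing $\mathsf{K}_u$ (invariance of the specification's probabilistic bound across $\sim_u$-accessible worlds) is in fact a more careful rendering of the step the paper compresses into an appeal to the ``knowledge generalization axiom'' $\psi \Rightarrow \mathsf{K}_i\psi$.
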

\begin{proof}
It is sufficient to prove the validity of the following formulas:
\begin{align*}
&A_e^{SPEC} \wedge (A_{u}^{SL} \Rightarrow G_{u}^{SL}) \wedge (A_{u}^{SA} \Rightarrow G_{u}^{SA}) \Rightarrow G_e^{SPEC} \\
& A_e^{SPEC} \wedge (A_{u}^{SL} \Rightarrow G_{u}^{SL}) \Rightarrow A_{u}^{SA}  \\
& A_e^{SPEC} \wedge (A_{u}^{SA} \Rightarrow G_{u}^{SA}) \Rightarrow A_{u}^{SL}  
\end{align*}
The second condition is trivially true becasue $A_e^{SPEC} \Leftrightarrow A_{u}^{SA}$. Becasue $G_{u}^{SL} \Leftrightarrow G_e^{SPEC}$, to show the validity of the first formula it is sufficient to show that $G_{u}^{SA} \Rightarrow A_{u}^{SL}$. We first show the following:
\begin{align*}
&\mathsf{P}(dist^{\leq}(l_e^u,l_u^u,\underline{pos_{err}})) \geq \underline \alpha \Rightarrow \\
& \mathsf{P}(\neg dist_{leq}(l_e^u,l_u^u,\underline{pos_{err}})) \leq 1-\alpha
\end{align*}
This is valid because $\mathsf{P}(\psi) + \mathsf{P}(\neg \psi) = 1$. From $1 - \underline{\alpha} \leq \underline{\beta}$, and the knowledge generalization axiom $\psi \Rightarrow \mathsf{K}_i \psi$ \cite{fagin1994reasoning} we deduce that $\mathsf{K}_{u}( \mathsf{P}(\neg dist^{\leq}(l_e^u,l_u^u,\underline{pos_{err}})) \leq \beta ) $.
\end{proof}

\begin{remark}
It may seem that the knowledge modality is not really important in the assumption of $C_{u}^{SL}$. Notice, however, that the situational awareness contract may induce potentially several possible worlds with different probability spaces relating the external and internal world model. For example, consider a contract $C_{u}^{SAf}$ that allows for potential failures that degrade the performance of situational awareness. The guarantee of this new contract is $(failure_e \Rightarrow \mathsf{P}(dist^{\leq}(l_e^u,l_u^u,\underline{pos_{err}})) \geq \underline \gamma) \wedge (\neg failure_e \Rightarrow \mathsf{P}(dist^{\leq}(l_e^u,l_u^u,\underline{pos_{err}})) \geq \underline \gamma')$. There are now two possible worlds, one where $failure_e$ is true and one where $failure_e$ is false, where the probability associated with a localization error above $\underline{pos_{err}}$ is different. Depending on the value of $\gamma$ and $\gamma'$, the safe landing capability may or may not know whether localization is accurate enough.
\end{remark}

The next refinement step shows the power of a compositional method. We are going to refine contract $C_{u}^{SA}$ and check that the refinement is valid, but we would not need to check that $C_e^{SPEC}$ is still satisfied. Moreover, we will show how requirements are generated and allocated to different systems in this scenario. To refine $C_{u}^{SA}$, we equip $u$ with a GNSS system modeled by contract $C_{u}^{GNSS}$. This system only works under simple or moderate weather conditions. It guarantees to provide a position estimate $l_u^{u,gnss}$ close to the true position with probability greater than $\underline{\alpha}$, but only in simple environments.  Notice that this contract alone, obviously does not refine $C_{u}^{SA}$. The GNSS system is complemented by a ground system under the control of the local operator and modeled by contract $C_{ctrl}^{GRD}$. This system works well even in cluttered environments, but has a limited range equal to $\underline{r}$. It provides a sufficiently accurate position estimate $l_{ctrl}^{u,grd}$. To simplify the example, we assume a perfect communication link, and also a perfect fusion algorithm. The communication link contract $C_{u}^{LNK}$ guarantees that the interpretation of $l_{ctrl}^{u,grd}$ (i.e., the location estimate on ground) is the same as the interpretation of $l_u^{u,grd}$ (i.e., the received estimate through the communication link) under all environments. The fusion contract $C_{u}^{FUSION}$ guarantees that the on-board system selects the source of the location estimate depending on the location of the vehicle. 

Following the same approach used in the proof of Property \ref{prop:refinement-sa-dm}, it is easy to show that the contract set $\{C_{u}^{GNSS},C_{ctrl}^{GRD},C_{u}^{LNK},C_{u}^{FUSION}\}$ does not refine $C_{u}^{SA}$. In particular, consider a location $l$ such that $\neg dist^{\leq}(l,\underline{p},\underline{r})$, and $cp_e(l,\underline{h})$. In this case, the assumptions of $C_{u}^{GNSS}$ and $C_{ctrl}^{GRD}$ are not satisfied, and therefore their guarantees don't need to hold. Thus, there would be no way for the guarantee of the contract set to entail the guarantee of $C_{u}^{SA}$. To address this problem it is sufficient to add a requirement on the environment modeled by contract $C_{e}^{COMPL}$. This contract guarantees that the environment at a distance greater than the range $\underline{r}$ from the port  $\underline{p}$ is simple. It can now be shown that the set $\{C_{u}^{GNSS},C_{ctrl}^{GRD},C_{u}^{LNK},C_{u}^{FUSION}, C_{e}^{COMPL}\}$ indeed refines $C_{u}^{SA}$. To check that this is the case, it is sufficient to see that each component sees its assumption satisfied by the assumptions of $C_{u}^{SA}$. Furthermore, the guarantees of $C_{u}^{SA}$ are either satisfied by the GNSS system at distances beyond $\underline{r}$ (given that the environment guarantees $cp_e(l,\underline{s})$ there), or by the ground system otherwise.
\end{example}

\paragraph{Remarks on analysis tools}
To make the methodology presented in this section effective, several tools must also be integrated in the design process. Two key analysis tools are needed: refinement checking, and component verification. Refinement checking consists in verifying that a set of contracts $\{C_1,\ldots,C_n\}$ refines a specification contract $C$. As mentioned in this section, refinement checking can be reduced to validity checking, or satisfiability checking, in the specification language for knowledge and probability. Several axiomatizations are available for similar languages (see for example \cite{fagin1994reasoning,belardinelli_interactions_2012}) which could be used to develop satisfiability solvers. However, there seems to be a lack of usable implementations. 

The component verification problem consists in showing that a component implementation $I$ correctly implements a given contract $C=(A,G)$. The verification methods may depend on the form used to express the implementation. When $I \in \mathcal{S}_{KP}$, then the problem can be reduced again to validity checking. The formula to be shown valid is $I\wedge A \models G$. However, in many cases, the implementation $I$ is directly given as code or a simulation model. Uncertainty quantification tools \cite{smith_uncertainty_2013,swiler_epistemic_2009} can be used in these cases. In the case of autonomous systems, specialized advanced tools are also available to find environments in which a specification is not met \cite{dreossi_compositional_2019,fremont_formal_2020}.

\section{Concluding remarks}
\label{sec:summary}
The set of models, methods, and tools for dealing with uncertainty in cyber-physical systems have advanced to a point where engineering can confidently deliver high-assurance systems. Cyber-physical systems have evolved into complex distributed, multi-agent systems that expose new challenges in analysis and design. Traditional techniques for dealing with uncertainty need to be complemented by other tools that can generate tight requirements under aleatoric and epistemic uncertainty. Reasoning compositionally about all these types of uncertainty is necessary to scale verification to large systems, to allow for integration of capabilities from multiple parties, and to enable incremental deployment of autonomous functions with minimal effort. 

We presented a methodology and a compositional modeling framework that addresses these new requirements, and we have shown their application to an unmanned aerial system. While this is a promising step, further research is needed in the area of automated reasoning for analysis and verification.

\section{Acknowledgement}
The NASA University
Leadership Initiative (grant \#80NSSC20M0163) provided funds to assist the authors with their research, but this article solely reflects the
opinions and conclusions of its authors and not any NASA entity.

\bibliographystyle{spmpsci}
\bibliography{citations}

\end{document}